\newtcbox{\mymath}[1][]{%
    nobeforeafter, math upper, tcbox raise base,
    enhanced, colframe=blue!30!black,
    colback=blue!30, boxrule=1pt,
    #1}
\newcommand{\dc}{\text{Dual-DWC}\xspace}
\newcommand{\kcco}{{\sc $k$-CCO}\xspace}
\newcommand{\spara}[1]{\smallskip{\bf #1}}
\newcommand{\mycomment}[1]{}
\newcommand{\hide}[1]{}
\newcommand*\bigcdot{\mathpalette\bigcdot@{.5}}
\newcommand*\bigcdot@[2]{\mathbin{\vcenter{\hbox{\scalebox{#2}{$\m@th#1\bullet$}}}}}
\newcommand{\NPhard}{$\mathbf{NP}$-hard}
\newcommand{\argmax}{\mathop{\rm argmax}}
\newtheorem{problem}     {Problem}
\newcommand{\squishlist}{
 \begin{list}{$\bullet$}
  {  \setlength{\itemsep}{0pt}
     \setlength{\parsep}{3pt}
     \setlength{\topsep}{3pt}
     \setlength{\partopsep}{0pt}
     \setlength{\leftmargin}{2em}
     \setlength{\labelwidth}{1.5em}
     \setlength{\labelsep}{0.5em}
} }
\newcommand{\squishlisttight}{
 \begin{list}{$\bullet$}
  { \setlength{\itemsep}{0pt}
    \setlength{\parsep}{0pt}
    \setlength{\topsep}{0pt}
    \setlength{\partopsep}{0pt}
    \setlength{\leftmargin}{2em}
    \setlength{\labelwidth}{1.5em}
    \setlength{\labelsep}{0.5em}
} }
\newcommand{\squishdesc}{
 \begin{list}{}
  {  \setlength{\itemsep}{0pt}
     \setlength{\parsep}{3pt}
     \setlength{\topsep}{3pt}
     \setlength{\partopsep}{0pt}
     \setlength{\leftmargin}{1em}
     \setlength{\labelwidth}{1.5em}
     \setlength{\labelsep}{0.5em}
} }
\newcommand{\squishend}{
  \end{list}
}
 \let\OLDthebibliography\thebibliography
\renewcommand\thebibliography[1]{
  \OLDthebibliography{#1}
  \setlength{\parskip}{0pt}
  \setlength{\itemsep}{0pt plus 0.3ex}
}
\newtheorem{theorem}{Theorem}[section]
\theoremstyle{definition}
\newenvironment{fminipage}
{\begin{Sbox}\begin{minipage}}
		{\end{minipage}\end{Sbox}\fbox{\TheSbox}}
\begin{document}
	
	\title{Dense and well-connected subgraph detection in dual networks}

	\author{Tianyi Chen\\
		Boston University\\
		\texttt{ctony@bu.edu}
		\and
		Francesco Bonchi\\
		ISI Foundation, Turin\\
		Eurecat, Barcelona, Spain \\
		\texttt{ francesco.bonchi@isi.it}
		\and
		David Garcia-Soriano\\
		ISI Foundation, Turin\\
		\texttt{d.garcia.soriano@isi.it}
		\and
		Atsushi Miyauchi\\
		University of Tokyo\\
		\texttt{miyauchi@mist.i.u-tokyo.ac.jp}
		\and
		Charalampos E. Tsourakakis\\
		Boston University \& ISI Foundation\\
		\texttt{tsourolampis@gmail.com}
	}
	\maketitle
	
    \begin{abstract} 
    Dense subgraph discovery is a fundamental problem in graph mining with a wide range of applications~\cite{gionis2015dense}.
Despite a large number of applications ranging from computational neuroscience to social network analysis,  that take as input a {\em dual} graph, namely a pair of graphs on the same set of nodes, dense subgraph discovery methods focus on a single graph input with few notable exceptions \cite{semertzidis2019finding,charikar2018finding,reinthal2016finding,jethava2015finding}.  In this work, we focus the following problem: 

%Nowadays, numerous real-world applications take as input a {\em dual} graph, namely a pair of graphs on the same set of nodes. Despite a large number of such applications, ranging from computational neuroscience to social network analysis, dense subgraph discovery methods focus on a single graph input with few notable exceptions \cite{semertzidis2019finding,charikar2018finding,reinthal2016finding,jethava2015finding}.  In this work, we focus the following problem: 

\begin{quotation}
\noindent Given a pair of graphs $G,H$ on the same set of nodes $V$, how do we find a subset of nodes $S \subseteq
V$ that induces a well-connected subgraph in $G$ and a dense subgraph in $H$?   
\end{quotation}

\noindent Our formulation generalizes previous research on dual graphs~\cite{Wu+15,WuZLFJZ16,Cui2018}, by enabling the   {\em control} of the connectivity constraint on $G$. We propose a novel mathematical formulation based on $k$-edge connectivity, and prove that it is solvable exactly in polynomial time. We compare our method to state-of-the-art competitors; we find empirically that ranging the connectivity constraint enables the practitioner to obtain  insightful information  that is otherwise inaccessible. Finally, we show that our proposed mining tool can be used  to better understand how users interact on Twitter,  and  connectivity aspects of human brain networks  with and without Autism Spectrum Disorder (ASD).

    \end{abstract}

    \section{Introduction}
\label{sec:introduction}

Dense subgraph discovery (DSD) is a major research area of graph mining, whose goal is to develop methods that extract one or more dense clusters, possibly overlapping, from  a static or time-evolving network~\cite{gionis2015dense}. There exist numerous dense subgraph discovery formulations, with different computational complexity, and a wide range of DSD-based real-world applications, see the tutorial by Gionis and Tsourakakis~\cite{gionis2015dense}. However, an increasing number of real-world problems involve a pair of graphs on the same vertex set.  For example, in neuroscience, an important pair of brain networks  is given by the  \emph{structural connectivity} and \emph{functional connectivity} graphs, defined on the basis of correlation on the time series of activity of the different brain regions (co-activation) \cite{brain,lanciano2020explainable}. In bioinformatics, differential co-expression network analysis of gene expression data is used to analyze gene-to-gene coexpression  jointly across two different networks, especially in cancer research \cite{Liu18}.
In computational social science, we are interested in understanding better how users interact on social media such as Twitter, where
for instance,  users may retweet each other, and may favorite certain tweets of other users. These interactions naturally induce two
graph layers, the {\em retweet} layer and the {\em favorite layer}.    Qi et al.~\cite{qi2012community}  use   two Flickr graph topologies to mine online communities; one topology is naturally induced by online friendships between users, while the second one is created from log files by adding an edge between two users if there exists a photo that has been ``liked'' by both of them. In reality mining, one is interested in
understanding relationships defined by Bluetooth scans and phone calls respectively \cite{eagle2006reality,dong2012clustering}. Furthermore, dual graphs are a special case of multi-layer networks \cite{magnani2011ml}, when the number of layers is equal to 2. Despite the large amount of research on DSD, relatively few works focus on  dense subgraph discovery across more than one graphs, see \cite{Wu+15,WuZLFJZ16,semertzidis2019finding,charikar2018finding,kim2015community,reinthal2016finding,jethava2015finding}; this is not a coincidence, as there exist few optimization problems that have been studied on pairs of graphs, see for instance the work of Bhangale et al. on the problem of simultaneous Max-Cut~\cite{bhangale2020simultaneous,bhangale2018near,bhangale2015simultaneous}. It is worth mentioning that there exist several works on  bi-criteria optimization problems where the two graph topologies coincide, potentially with two different sets of edge weights, e.g., \cite{ravi1996constrained,tsourakakis2018risk}. On the more applied side, there exist algorithmic heuristics and machine learning algorithms that mine multi-layer graphs for a variety of problems, ranging from core decomposition to link prediction, e.g.,   \cite{dong2012clustering,jiang2009mining,sotiropoulos2019twittermancer,silva2010structural,lanciano2020explainable,liu2020corecube,tsourakakis2019novel}, but these do not provide theoretical guarantees on the output quality. 

The existence of important applications creates a practical need for novel graph mining tools that allow to jointly mine two graph topologies on the same set of nodes. Towards this direction,  we introduce in this work the following problem that generalizes the works of Wu et al. \cite{WuZLFJZ16,Wu+15} and Cui et al.~\cite{Cui2018}:

\begin{tcolorbox}
\begin{problem}
\label{prob:prob1}
\normalfont
Given two simple, unweighted, undirected graphs $G=(V,E_G)$ and $H=(V,E_H)$,  find a set of nodes $S\subseteq V$ such that  $G[S]$ is well-connected and  $H[S]$ is dense.
\end{problem}
\end{tcolorbox}

\noindent In prior related work~\cite{Cui2018,WuZLFJZ16,Wu+15} the goal is to ensure $G[S]$ is (simply) connected and $H[S]$ is dense. To the best of our knowledge, our work is the first  that enables the control of the connectivity constraint over dual graphs, and this constitutes a powerful feature of our method  in practice. Specifically, our contributions include the following:

\vspace{2mm}

$\bullet$ {\bf Problem formulation.} We  formalize Problem~\ref{prob:prob1} using for the connectivity and density constraints the well-established graph theoretic notions of $k$-edge-connectivity and minimum degree respectively. Specifically, we study algorithmically the following problem: 

\begin{tcolorbox}
\begin{problem}
\label{prob:maxmin}
\normalfont
Given two simple, unweighted, undirected graphs $G=(V,E_G)$ and $H=(V,E_H)$, and positive integer $k$,  find a set of nodes $S\subseteq V$ such that  $G[S]$ is $k$-edge connected and the minimum degree in $H[S]$ is maximized.
\end{problem}
\end{tcolorbox}

 \begin{figure}
 \begin{tabular}{ccc}
   \includegraphics[width=0.3\textwidth]{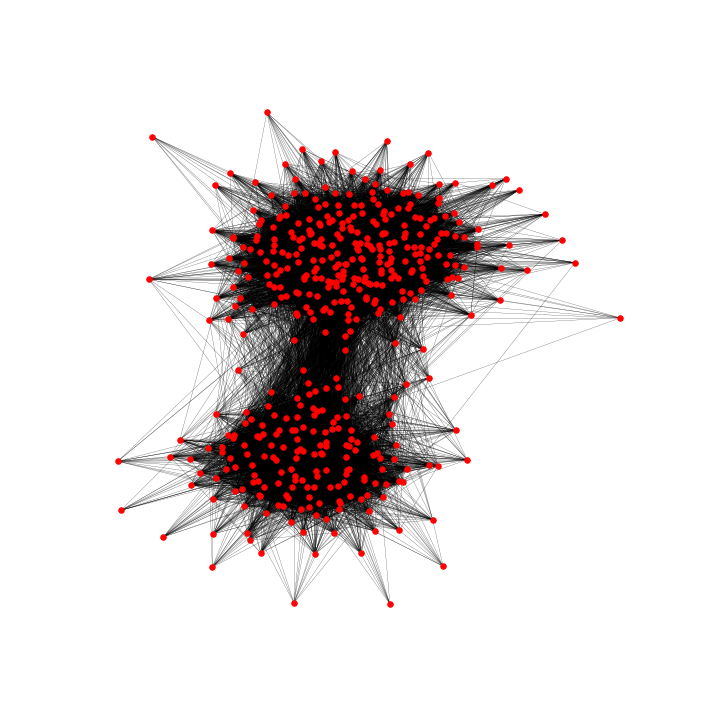} &    \includegraphics[width=0.3\textwidth]{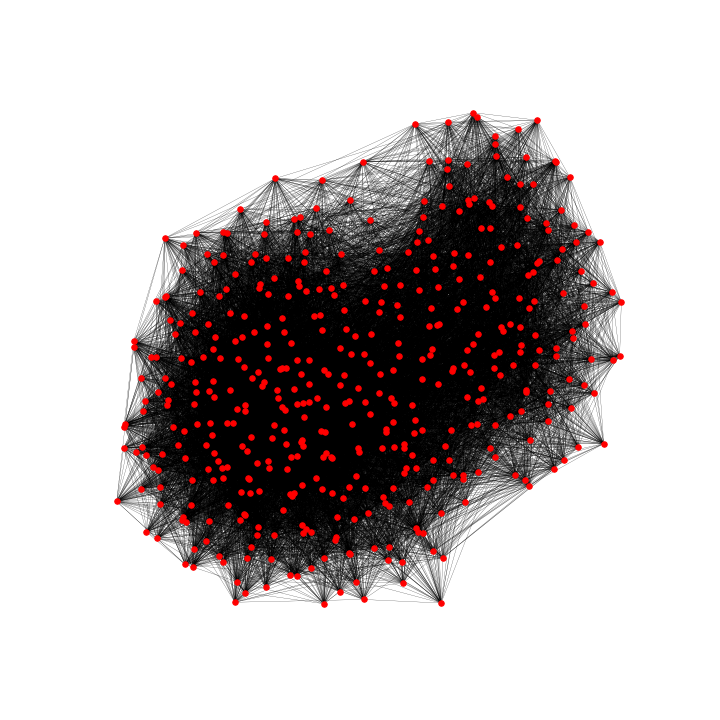} &
   \includegraphics[width=0.37\columnwidth]{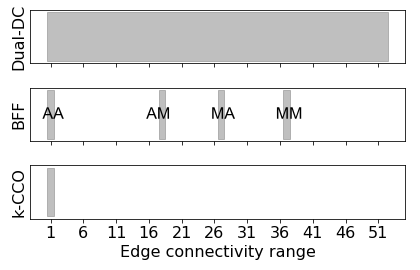} \\
   (a) &
     (b) & (c)  \\
 \end{tabular}
 \caption{\label{fig:introfig} Visualization of our algorithm's output on the Twitter retweet ($G$) and reply ($H$) graphs, with connectivity requirement $k=8$. (a) 8-edge-connected retweet subgraph.  (b) Dense reply subgraph with min degree 51. (c) The four variants (AA, AM, MA, and MM) of Best Friends Forever (BFF) \cite{semertzidis2019finding} result in four different increasing connectivity values, and \kcco~\cite{Cui2018} yields a simply connected subgraph on the dual Twitter graph (reply, quote). Only our \dc method allows to {\em fully control} the connectivity constraint. }
 \vspace{-5mm}
 \end{figure}

\noindent We shall also refer to $k$ value as the connectivity requirement.  Perhaps surprisingly, the choice of the minimum degree over the average degree as the density measure allows to solve Problem~\ref{prob:maxmin} efficiently. In contrast the average degree objective leads to an intractable formulation (see Section~\ref{sec:related}).   
 
\vspace{2mm}

$\bullet$ {\bf Algorithm design.} We prove that Problem~\ref{prob:maxmin}  is solvable in polynomial time. Our
proof is constructive, i.e., we design a polynomial-time exact algorithm that scales to large networks.
Figure~\ref{fig:introfig}(a) and (b) shows the output of our algorithm on two layers of the Twitter network, namely the retweet and the reply layers. Specifically, Figure~\ref{fig:introfig}(a) depicts the 8-edge-connected retweet subgraph, and  Figure~\ref{fig:introfig}(b) the resulting dense subgraph on the reply layer, both induced by the optimal set of nodes found by our algorithm. For more details,  see Section~\ref{sec:exp}.

$\bullet$ {\bf Ranging connectivity.} A key contribution of our work over prior methods is our ability to control the connectivity requirement. Figure~\ref{fig:introfig}(c) shows a preview of the range of connectivity values on the dual Twitter graph (reply, quote) that the existing approaches BFF~\cite{semertzidis2019finding} and \kcco~\cite{Cui2018} output. The plot for BFF is annotated by the specific version of the BFF problem (see Section~\ref{sec:related} for more details). Observe that maximizing the average degree over the union of the two graphs results in a densest subgraph that is 1-edge-connected. \kcco also ouputs a simply connected graph on $G$, that can be disconnected by the removal of a single edge.  While BFF-MM achieves high edge connectivity, only our method can {\em control} fully the connectivity requirement.  This is a powerful feature of our work, that enables the practitioner to obtain insights, not accessible by other competitors. 

\vspace{1.5mm}

$\bullet$  {\bf Application \#1: Mining multilayer networks.} We use our algorithmic primitive to mine different layers of the Twitter network. The characteristics of the two subgraphs induced by the optimal set of nodes provide insights into the different types of interactions of users on Twitter. 

\vspace{1.5mm}

$\bullet$ {\bf Application \#2: Mining human brain networks.} We use our algorithmic primitive on 101 human children brain datasets available from the Brain Imagine Data Exchange (ABIDE) project \cite{craddock2013neuro}.  Among these datasets, 52 correspond to typically developed (TD) children, and 49 to children with Autism Spectrum Disorder (ASD).  We show that our algorithm can be used to extract a clear signal of separation on average  between a pair of  brain networks corresponding to two typically developed (TD) kids, and a pair of  networks each corresponding to a TD child and a child suffering from ASD.

    \section{Related Work}
\label{sec:related}

\noindent \spara{Mining dual graphs.} All graphs considered in this paper are simple, unweighted, and undirected.  Wu et al. \cite{Wu+15,WuZLFJZ16} pose the question of finding a subset of nodes $S \subseteq V$ such that $G[S]$ forms a connected graph and $H[S]$ maximizes the average degree; they prove that this problem is \NPhard\ and designed a heuristic. Cui et al. introduce the $k$-connected core problem (k-CCO) for dual graphs. The k-CCO formulation takes as input a dual graph $(G,H)$ and a parameter $k$, and  aims to find a set of nodes $S \subseteq V$ that induced a connected $k$-core in $G$, and a simply connected subgraph in $H$. Semertzidis et al. introduce  the Best Friends Forever (BFF) problem  \cite{semertzidis2019finding}; see also Charikar et al.~\cite{charikar2018finding} for improved complexity and algorithmic results.  Specifically, Semertzidis et al. \cite{semertzidis2019finding} propose four different formulations for finding a subset of nodes $S \subseteq V$ that induces a dense subgraph across a collection of graphs with vertex set $V$. They  use the minimum and the average degree to measure edge density, and then maximize either the average or the minimum  measure of edge density across the collection of graphs. This results in four variants abbreviated as  MM, MA, AM, and AA.  The second letter indicates the density measure, while the first whether we consider the average (A) or the minimum (M) across the collection.  For example, AM aims to   maximize  the average  induced minimum degree over all graphs, whereas AA aims to maximize the average induced average degree over all graphs.  It is worth emphasizing that BFF variants impose no connectivity constraint, but according to the classic Mader's theorem we know that every graph of average degree (at least) $4k$ has a $k$-connected subgraph \cite{mader1972existenzn}. Table~\ref{tab:synopsis} summarizes the papers that lie closest to our work. 

Despite the existence of numerous heuristics for mining dual graphs, and more generally multi-layer networks, there are significantly fewer results related to optimizing concrete mathematical objectives simultaneously over two or more graphs  on the same set of nodes. Notably, Bhangale et al. recently designed a near-optimal algorithm for the simultaneous Max-Cut   problem , and proved a hardness result that shows that simultaneous optimization of Max-Cut over more than one graphs is harder than a single graph in terms of approximation \cite{bhangale2018near,bhangale2020simultaneous}, see also   \cite{bollobas2002problems,kuhn2005maximizing} for earlier results on judicious partitions.   There is a large amount of research work related to multi-layer graphs for other problems such as core decomposition \cite{Galimberti+17} and community detection  \cite{kim2015community}.   Yang et al. \cite{yang2018mining}, Tsourakakis et al. \cite{tsourakakis2019novel} study the problem of finding a subset of nodes that induces a dense subgraph on $G$ and a sparse subgraph on $H$. Also related to our work from an experimental point of view is the  work of Lanciano et al.  \cite{lanciano2020explainable}  who use  human brain networks to generate easy-to-interpret graph features that can  be used to diagnose autism disorders.

 \begin{table*}[]
\centering
\begin{tabular}{c|c|c|c}
{Methods} & Density measure  ($\max $) & \makecell{Connectivity\\ constraint on $G[S]$} & Hardness   \\
\hline
{Wu et al.~\cite{Wu+15}}      &  Avg. degree  $d_H(S)$       &  Connected       &          NP-hard  \\
\hline
{Cui et al.~\cite{Cui2018}}    &   Maximal $k$-core  \hide{ $|S|:\min \deg (H_S)\geq k$ }     &    Connected       &     P      \\
\hline
BFF-MM~\cite{semertzidis2019finding}          & $\min (\delta_G(S), \delta_H(S))$  &           N/A      & P~\cite{semertzidis2019finding} \\
BFF-AM          &    $ \delta_G(S)+\delta_H(S)$ &   N/A     &        NP-hard~\cite{charikar2018finding}    \\
BFF-MA          & $\min (d_G(S), d_H(S))$ & N/A     & NP-hard~\cite{charikar2018finding}\\
BFF-AA          & $d_G(S)+d_H(S)$   &         N/A              &  P \cite{Charikar00,semertzidis2019finding} \\
\hline
{\dc (Ours)}    &  $\delta_H(S)$    & $k$-edge connected     & P  \\
\end{tabular}
\caption{  \label{tab:synopsis} Comparison of our proposed framework \dc to other prior work. Here, for a graph $G$ and a subset of nodes $S\subseteq V_G$, quantities $d_G(S),\delta_G(S)$ are the average degree of $G[S]$ and the minimum induced degree $\text{min-deg}_G(S):=\min_{v \in S} \text{deg}_{G[S]}(v)$ respectively.}
\end{table*}

\spara{Dense subgraph discovery.} One of the most popular optimization models in DSD is the densest subgraph problem (DSP), which asks to find a subset of nodes $S\subseteq V$ that maximizes the average degree of $G[S]$. 
Unlike most of the other models, this problem is known to be polynomial-time solvable~\cite{Goldberg84}.  Furthermore, there exists a linear time, greedy $\frac{1}{2}$-approximation algorithm \cite{Asahiro+00,Charikar00} that was recently generalized and analyzed by Boob et al. \cite{boob2020flowless}  and  Chekuri, Quandru, and Torres \cite{chandra} respectively.
Furthermore, there exists a large number of DSP variations. The most well-studied variants are the size-restricted ones~\cite{Andersen_Chellapilla_09,Bhaskara+10,Feige+01,Khuller_Saha_09}.
For example, in the densest $k$-subgraph problem~\cite{Feige+01}, given a graph $G$ and a positive integer $k$,
we are asked to find $S\subseteq V$ that maximizes the average degree of $G[S]$ subject to the size constraint $|S|=k$.
It is known that such a restriction makes the problem much harder to solve;
in fact, the densest $k$-subgraph problem is \NPhard\ and the best known approximation ratio is $O(|V|^{{1/4}+\epsilon})$ for any $\epsilon>0$~\cite{Bhaskara+10}.
The average degree has recently been generalized in various ways 
to obtain more sophisticated structures~\cite{Miyauchi_Kakimura_18,Veldt,Kawase_Miyauchi_18,Mitzenmacher+15,Tsourakakis_15}. 
The densest subgraph problem has also been studied in various other settings, including the dynamic settings\cite{Bhattacharya+15,Epasto+15,Hu+17,sawlani2020near}, and the MapReduce model~\cite{Bahmani+12,bahmani2014efficient}. Finally, Bonchi et al. \cite{bonchi2021finding} very recently proposed a family of algorithms for finding the densest $k$-connected subgraphs on the single network topology; our current work extends this work to the dual graph setting.

\spara{Partitioning a graph into well-connected components.} A line of research closely related to DSD aims to partition a graph into well-connected components.  A major connectivity notion is \emph{$k$-edge-connectivity}. An unweighted graph $G$ is said to be $k$-edge-connected if the removal of any $k-1$ or fewer edges leaves $G$ connected, or equivalently, by Menger's theorem (see~\cite{diestel2000graphentheory}), if there are at least $k$ edge-disjoint paths between every pair of distinct vertices. Zhou et al.~\cite{Zhou+12} recently studied the problem to find a family of maximal $k$-edge-connected subgraphs.
To find that, a naive approach is to iteratively compute a minimum cut on each of connected components
until every connected component is either $k$-edge-connected or a singleton.
However, such an approach is computationally quite expensive and thus prohibitive for large graphs.
To overcome this issue, Zhou et al.~\cite{Zhou+12} devised some techniques and incorporated them into the naive approach.
Later, Akiba et al.~\cite{Akiba+13} and Chang et al.~\cite{Chang+13} developed much more efficient algorithms.
The algorithm by Akiba et al.~\cite{Akiba+13} runs in $O(|E|\log |V|)$ time,
whereas Chang et al.'s algorithm~\cite{Chang+13} runs in $O(hl|E|)$ time,
where $h$ is the height of the so-called decomposition tree and $l$ is the number of iterations of some subroutine,
both of which are instance-dependent parameters typically bounded by a small constant.

There are many other graph partitioning algorithms related to the above.
Hartuv and Shamir~\cite{Hartuv_Shamir_00} designed an algorithm to partition a graph into \emph{highly connected} subgraphs
(i.e., subgraphs with edge connectivity greater than half the number of its vertices).
Toida~\cite{Toida85} gave an efficient algorithm to find all maximal subgraphs with a pre-specified spanning-tree packing number.
The spanning-tree packing number of a graph $G$ is the maximum number of edge-disjoint spanning trees
that one can pack in $G$, which is closely related to the average degree of the densest subgraph  of $G$ (see, e.g.,~\cite{gusfield1983connectivity, catlin2009edge}).

    \section{Proposed Method}
\label{sec:proposed}

 We design a polynomial-time exact algorithm for Problem~\ref{prob:maxmin}. Our algorithm is shown in pseudocode as Algorithm~\ref{algo:maxmin}, which takes as input the two graphs $G,H$ on the same vertex set
$V$ and a parameter $k$ that specifies the $k$-edge connectivity constraint on $G$. The algorithm is recursive, and carefully breaks down $G$ in its $k$-edge connected components. Once the vertex set $S$ induces a $k$-edge connected component in $G$, the algorithm removes a node $v \in S$ having the lowest degree in $H$ and returns as its output the best between $S$ and
the result of a recursive invocation of the algorithm on $S\setminus \{v\}$.
The idea is to maintain the invariant that any solution which is $k$-connected in $G$ and with higher minimum $H$-degree than the
current best must be entirely contained in one of the subgraphs that we recurse into. Our main theoretical result concerns the correctness of our algorithm, and is stated as the next theorem. 

\begin{algorithm2e}[t]
\caption{\label{algo:maxmin} $\dc(G(V,E_G),H(V,E_H),k)$ }
\SetKwInput{Input}{Input}
\SetKwInput{Output}{Output}
\Input{\ $G=(V,E_G)$, $H=(V,E_H)$, $k\ge 1$ (edge connectivity)}  
\Output{\   $S^*$ such that the minimum degree in $H[S^*]$ is maximized subject to $G[S^*]$ being $k$-edge connected; or NO SOLUTION}  
%$S \leftarrow V$\; 
%\If{$S$ is not $k$-edge-connected in $G$}{ 
\If{$G$ is not $k$-edge-connected}{ 
%Let $\mathcal{F} \leftarrow \{C_1, \ldots, C_r\}$ be the $k$-edge-connected components of $S$ in $G$\; 
Let $\mathcal{F} \leftarrow \{C_1, \ldots, C_r\}$ be the maximal $k$-edge-connected components of $G$\; 
\If{$r = 0$}{
    \Return NO SOLUTION
}\Else{
\For{$i=1$ to $r$}{
    $S_i \leftarrow \dc(G[C_i],H[C_i],k)$\;
%    $d_i \leftarrow density(G[S_i])$
}
$t \leftarrow \argmax \{ \textrm{min-deg}_H(S_i) \mid i \in [r] \wedge S_i \neq \text{NO SOLUTION} \}$\;
\Return $S_t$
}
    
%\Return{the best among $\{ f(G[S],H[S],k)\}_{S \in \mathcal{F}}$}\;
}
%\ElseIf{$|S|>1$}{
\ElseIf{$|V|>1$}{
Let $v$ be the vertex in $V$ %$S$
    of minimum degree in $H[V]$ (ties broken arbitrarily)\; 
    %of minimum degree in $H[S]$ (ties broken arbitrarily)\; 
$T \leftarrow \dc(G[V\setminus \{v\}],H[V\setminus \{v\}],k)$\;
\If{$T$ = NO SOLUTION or $\textrm{min-deg}_H(T) \le \textrm{min-deg}_H(V)$ }{
    \Return{V}
}\Else{
    \Return{T}
}

%\Return{the best between $S$ and $f(G[S-\{v\}],H[S-\{v\}],k)$}\;
}
\Else{
\Return{$V$}
}
\end{algorithm2e}

\begin{theorem}
\label{thrm1} 
Algorithm~\ref{algo:maxmin} outputs an optimal solution for Problem~\ref{prob:maxmin}. 
\end{theorem}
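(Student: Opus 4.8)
The plan is to prove, by strong induction on $n := |V|$, the following statement: Algorithm~\ref{algo:maxmin} returns a set $S \subseteq V$ with $G[S]$ $k$-edge-connected and $\textrm{min-deg}_H(S)$ maximum among all such sets, and it returns NO SOLUTION precisely when no $S\subseteq V$ induces a $k$-edge-connected subgraph of $G$. (As is implicit in the algorithm's base case, a single vertex is taken to be $k$-edge-connected.) First I would observe that the induction is well-founded: in the first branch each returned component satisfies $C_i\subsetneq V$ — since $G[C_i]$ is $k$-edge-connected by construction, $C_i=V$ would make $G$ itself $k$-edge-connected, contradicting the branch condition — and in the second branch we recurse on $V\setminus\{v\}$; so every recursive call strictly decreases $|V|$. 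The base case $n=1$ is immediate: $V$ is the only nonempty candidate, it is $k$-edge-connected by convention with $\textrm{min-deg}_H(V)=0$, and the algorithm returns $V$.

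Next I would handle the case that $G$ is not $k$-edge-connected. I would use two structural facts about the maximal $k$-edge-connected subgraphs $C_1,\dots,C_r$ computed by the decomposition subroutine (see \cite{Akiba+13,Chang+13}): (i) each $G[C_i]$ is $k$-edge-connected, and (ii) any vertex set $S$ with $G[S]$ $k$-edge-connected extends to a maximal one, hence $S\subseteq C_i$ for some $i$. The consistency identity $(G[C_i])[S]=G[S]$ and $(H[C_i])[S]=H[S]$ for $S\subseteq C_i$ then gives: a set contained in $C_i$ is feasible for the original instance iff it is feasible for the sub-instance $(G[C_i],H[C_i],k)$, with the same $H$-minimum degree. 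Hence by fact (ii) an optimal $S^\star$ of the original instance lies in some $C_{i^\star}$, and — applying the induction hypothesis to that sub-instance — $\textrm{min-deg}_H(S^\star)\le \textrm{min-deg}_H(S_{i^\star})$; conversely each $S_i\ne$ NO SOLUTION is feasible for the original instance with unchanged objective value. Therefore returning $S_t$ with $t\in\argmax_i \textrm{min-deg}_H(S_i)$ yields an optimal feasible solution, and (again by fact (ii)) if every sub-instance is infeasible then so is the original, so NO SOLUTION is correct.

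Then I would treat the case that $G$ is $k$-edge-connected and $n>1$. Here $V$ is feasible with value $d_0 := \delta_H(V)=\deg_{H[V]}(v)$, where $v$ is the chosen minimum-$H$-degree vertex. The crucial observation is that any feasible $S$ with $\textrm{min-deg}_H(S)>d_0$ must avoid $v$: if $v\in S\subseteq V$ then $\deg_{H[S]}(v)\le\deg_{H[V]}(v)=d_0$, contradicting the assumption. Consequently the optimal value of the original instance equals $\max\{\,d_0,\ \mathrm{OPT}(G[V\setminus\{v\}],H[V\setminus\{v\}],k)\,\}$, with an infeasible sub-instance contributing $-\infty$: ``$\ge$'' holds because $V$, and any feasible $T\subseteq V\setminus\{v\}$, are feasible for the original instance with unchanged objective; ``$\le$'' follows by splitting an optimal $S^\star$ into the cases $v\in S^\star$ (value $\le d_0$) and $v\notin S^\star$ (value $\le\mathrm{OPT}$ of the sub-instance). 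By the induction hypothesis the recursive call returns a $T$ realizing the latter optimum; the algorithm returns $V$ exactly when $T$ is infeasible or $\textrm{min-deg}_H(T)\le d_0$ (so the maximum is $d_0$) and returns $T$ otherwise (so the maximum is $\textrm{min-deg}_H(T)$, and $T$ is feasible for the original instance since $(G[V\setminus\{v\}])[T]=G[T]$). In both subcases the output is feasible and attains the optimum, completing the induction and hence the theorem.

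I expect the bulk of the work to be careful bookkeeping rather than a single deep argument: one must maintain the twin invariant that every set the algorithm can output is feasible for the \emph{original} instance (not merely for the sub-instance in which it was produced) and that no strictly better solution is discarded when recursing. The first half reduces to the trivial monotonicity of $k$-edge-connectivity under passing to induced subgraphs of a component; the genuinely non-obvious step — the one I would spell out most carefully — is the minimum-degree observation in the $k$-edge-connected case, which is exactly what licenses peeling off a single vertex. A secondary point I would pin down explicitly is the boundary convention (whether singletons, or sets of size below $k+1$, are admissible/$k$-edge-connected), since it governs precisely when NO SOLUTION should be emitted.
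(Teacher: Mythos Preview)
Your proposal is correct and follows essentially the same approach as the paper's proof: induction on $|V|$, with the non-$k$-edge-connected branch handled by the containment of any feasible $S$ in some maximal component $C_i$, and the $k$-edge-connected branch handled by the key observation that any feasible set with $H$-minimum degree strictly exceeding $\delta_H(V)$ must avoid the peeled vertex $v$. Your write-up is in fact more careful than the paper's on bookkeeping points (well-foundedness of the recursion, the NO SOLUTION case, and the singleton convention), but the underlying argument is the same.
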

\begin{proof}
We argue by induction on the size of  the common vertex set $V$. Let $S^*$ denote the optimal solution for $(G, H)$. If $|V| \le 1$, then $S^* = V$ (Line 18).
So assume $|V| \ge 2$.  Consider first the case where $G$ is not $k$-connected.
Any $k$-connected subgraph of $G$ is entirely contained in one of the maximal
$k$-connected components $C_1, \ldots, C_r$, so one of them contains $S^*$, say $C_j$.
By induction, the
optimal solution within $C_j$ is computed in Line 7 (when $i = j$), and no other solution computed in Line 7 is feasible and has a higher minimum degree in $H$ than $S^*$ (otherwise
        $S^*$ would not be optimal).
Thus the algorithm returns an optimal solution with the same minimum degree as $S^*$ in Line 9. If $G$ is $k$-connected, let $v \in V$ be a vertex with the minimum degree in $H$.
We distinguish the following two cases:
\begin{description}
% \leftskip=-14pt
    \item[Case (i):]  $V$ is an optimal solution (i.e., $\textrm{min-deg}_H(V) = \textrm{min-deg}_H(S^*)$). Then $V$ is returned in Line 14. 
    \item[Case (ii):] $V$ is not an optimal solution. In this case $\textrm{min-deg}_H(V) < \textrm{min-deg}_H(S^*)$ holds. 
    But this implies that no optimal solution includes $v$, because the degree
    of $v$ in $H[S^*]$
    cannot be larger than its degree in $H$. 
    Hence Line 12 computes an optimal solution for the pair $(G[V], H[V])$ by the induction hypothesis, which is returned in Line 16.
\end{description}
\end{proof}

\spara{Time complexity.} An efficient implementation of Algorithm~\ref{algo:maxmin} runs in time $O(n m \log n)$ in the  RAM model, where $n$ is the number of nodes and $m > 1$ is the maximum number of edges in $G$ and $H$.
     To see this, observe that in Line 11 of Algorithm~\ref{algo:maxmin}, after removing the minimum degree vertex of degree $d$, we can iteratively remove all nodes of degree $d$ in the induced subgraph, i.e., find the set
of nodes $C$ in the $(d+1)$-core of $H$ and replace $G$ and $H$ with $G[C]$ and  $H[C]$ respectively.
This reduces the number of $k$-connected
component computations in the algorithm and doesn't affect its correctness because in Case (ii) above, none of the additional vertices removed can be part of an optimal solution: the minimum degree of an optimal solution must be at least $d+1$, so it must
be contained in the $(d+1)$-core of $H$.

Let $d$ be the minimum degree of the optimal solution in $H$ (or 0 if none exists). We argue inductively that the running time of Algorithm~\ref{algo:maxmin} (with the aforementioned faster implementation) is $O(n
        + (d + 1) m\log n) = O(n m \log n)$. 
Indeed, the total running time spent between Lines 11 and 16, excluding recursive calls, is $O(n + m)$, and the maximum depth of recursive calls is
at most $d$.
Each computation of $k$-edge-connected components in a subgraph with $n'\le n$ vertices and $m'\le m$ edges takes time $O(m' \log n')$ by~\cite{Akiba+13}. In Line 2, $r$ subgraphs are found with $m'_1,
    \ldots, m'_r$ and $\sum_i m'_i \le m'$, $\sum_i n'_i \le n'$. Each recursive invocation takes time $O(n'_i + (d+1) m'_i \log n)$ by induction, 
    for a total time of $O(\sum_i (n'_i + (d + 1) m'_i \log n)) = O(n + (d + 1) m \log n)$, as claimed. We summarize the above analysis with the following theorem statement.

\begin{theorem}
\label{thm:runtime} 
Algorithm~\ref{algo:maxmin} can be implemented to run in $O(nm\log n)$ time, where $n = |V|$ and $m = \max(|E_G|,|E_H|)$. 
\end{theorem}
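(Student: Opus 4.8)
The plan is to amortize the total running time over the recursion tree of Algorithm~\ref{algo:maxmin}, grouping the calls by their depth, and to exploit that the subproblems appearing at any fixed depth live on pairwise disjoint vertex sets. I would assemble three ingredients: a structural description of the recursion tree, the $O(|E|\log|V|)$ bound of Akiba et al.~\cite{Akiba+13} for computing all maximal $k$-edge-connected components (which also decides whether the input graph is $k$-edge-connected), and the faster implementation sketched above in which Line~11 replaces $(G',H')$ by $(G'[C],H'[C])$, where $C$ is the $(d_0{+}1)$-core of $H'$ and $d_0=\textrm{min-deg}_{H'}(V')$ — i.e., it deletes not merely the minimum-$H$-degree vertex $v$ but every vertex of $H'$ whose current degree is at most $d_0$.

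First I would establish the structural facts. Every recursive call is made on an \emph{induced} pair $(G[V'],H[V'])$ whose vertex set is strictly smaller than its parent's — in the non-$k$-connected branch because a maximal $k$-edge-connected component of a non-$k$-edge-connected graph is a proper subset of its vertex set, and in the $k$-connected branch because at least $v$ is deleted — so the recursion has depth at most $n$. Next, the children of a call on $V'$ have pairwise disjoint vertex sets contained in $V'$: in the non-$k$-connected branch they are $C_1,\dots,C_r$, which are pairwise disjoint because any two maximal $k$-edge-connected vertex subsets that share a vertex coincide (their union is again $k$-edge-connected), and in the $k$-connected branch there is only one child. A straightforward induction on the depth then gives that at each depth $\ell$ the vertex sets of all calls are pairwise disjoint, so $\sum_{P\text{ at depth }\ell}|V'_P|\le n$ and, the $G'_P$ and $H'_P$ being induced subgraphs on these disjoint sets, $\sum_P|E_{G'_P}|\le|E_G|\le m$ and $\sum_P|E_{H'_P}|\le|E_H|\le m$.

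Then I would bound the cost of a single call excluding its recursive descendants. The only superlinear operation is the maximal $k$-edge-connected component computation on $G'_P$ via~\cite{Akiba+13}, costing $O(|E_{G'_P}|\log|V'_P|)=O(|E_{G'_P}|\log n)$; all the rest — reading off $H'_P$-degrees, peeling the $(d_0{+}1)$-core, picking the best of the $\le r$ component solutions, and $O(1)$ comparisons — costs $O(|V'_P|+|E_{G'_P}|+|E_{H'_P}|)$, and with incrementally maintained degrees the peeling is $O(1)$ amortized per deleted vertex and hence $O(n+m)$ over the whole execution. Summing the component-computation cost over all calls at a fixed depth is $O\!\bigl(\bigl(\textstyle\sum_P|E_{G'_P}|\bigr)\log n\bigr)=O(m\log n)$ by disjointness, and summing over the at most $n$ depths gives $O(nm\log n)$; the residual $O(|V'_P|+|E_{G'_P}|+|E_{H'_P}|)$ terms sum to $O(n+m)$ per depth and thus $O(n(n+m))=O(nm\log n)$ overall. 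That proves the theorem. To get the sharper $O(n+(d{+}1)m\log n)$ with $d$ the optimal minimum $H$-degree, I would replace the depth bound $n$ by $d+1$: by the Case~(ii) argument in the proof of Theorem~\ref{thrm1} the $(d_0{+}1)$-core peeling discards only vertices lying in no optimal solution of the current subproblem, and the (local) optimum value never increases along the recursion, so the induction runs with the hypothesis $O(n'+(d'+1)m'\log n)$.

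The step I expect to require the most care is this amortized accounting: a naive ``$O(m\log n)$ per call $\times$ $O(n)$ calls, each on a $\Theta(n)$-vertex graph'' estimate would introduce a spurious extra factor, and it is exactly the disjointness of the maximal $k$-edge-connected components — together with always recursing on \emph{induced} subgraphs — that collapses the per-depth cost back to $O(m\log n)$. The correctness of the faster implementation, on the other hand, is immediate from Theorem~\ref{thrm1}.
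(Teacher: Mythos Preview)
Your proposal is correct and rests on the same two ingredients the paper uses: the $O(m'\log n')$ decomposition of Akiba et al.\ and the disjointness of the subproblems arising from the maximal $k$-edge-connected components, together with the $(d_0{+}1)$-core peeling speedup. The difference is organizational. The paper argues by induction on the instance size with the \emph{optimal} minimum $H$-degree $d$ as a parameter, aiming straight for $O(n+(d{+}1)m\log n)$ and reading off $O(nm\log n)$ as a corollary; you instead stratify the recursion tree by depth, use disjointness to bound the total work at each level by $O(m\log n)$, and multiply by the trivial depth bound $n$. Your route is more elementary and arguably more robust for the stated $O(nm\log n)$ bound, since it avoids having to justify that the effective recursion depth is $O(d)$ (the paper asserts this but does not spell out why branching steps---which can lower the current minimum $H$-degree---do not spoil it). Conversely, the paper's induction, once made precise, yields the sharper instance-dependent bound directly; your sketch of that refinement via ``the local optimum never increases along the recursion'' is the right idea but would need the same care.
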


\noindent \spara{Remarks.} Our algorithm naturally extends to the following version of Problem~\ref{prob:maxmin}, where we are given a graph $G$ and a collection of graphs $\mathcal{H}=\{H_1,\ldots,H_T\}$ and our goal is to find a set of nodes $S\subseteq V$ such that  $G[S]$ is $k$-edge connected, and the minimum degree across the collection $\mathcal{H}$ is maximized.  The only difference in the algorithm is that the peeling in Line 11 is done across the set of graphs $\mathcal{H}$, i.e., the node being removed is the node that has the smallest degree across $H_1,\ldots,H_T$. The proof follows our inductive proof and the argument in Proposition 1~\cite{semertzidis2019finding}. Finally, if instead of using the minimum degree $\delta_H(S)$, we use the average degree $d_H(S)$, the formulation of Problem~\ref{prob:prob1} becomes NP-hard.  This is a direct corollary of Wu et al.~\cite{Wu+15} for the special case of our problem with 1-edge connectivity constraint.

    \section{Experimental results}
\label{sec:exp}

% \begin{figure*}[!ht]

% 	\begin{subfigure}[]{.3\textwidth}
% 		\centering
% 		\includegraphics[width=1.2\textwidth]{}
% 		\caption{ }
% 		\label{fig:syn_k=2}
% 	\end{subfigure}
% 	%  \hfill
% 	\begin{subfigure}[]{.3\textwidth}
% 		\centering
% 		\includegraphics[width=1.2\textwidth]{}
% 		\caption{ }
% 		\label{fig:syn_k=4}
% 	\end{subfigure}
% 	%  \hfill
% 	\begin{subfigure}[]{.3\textwidth}
% 		\centering
% 		\includegraphics[width=1.2\textwidth]{}
% 		\caption{ }
% 		\label{fig:syn_k=8}
% 	\end{subfigure}
% 	\caption{Heatmaps of running time (in secs) on $G(n,p_G),H(n,p_H)$ dual graphs for different pairs $(p_G,p_H)$, and different $k$-edge connectivity requirements: (a) $k=2$, (b) $k=4$, (c) $k=8$.}
% 	\label{fig:heatmaps}
% \end{figure*}

\subsection{Setup}
\label{subsec:setup}
\hfill
 
\spara{Datasets.}   For our synthetic experiments, we generate graphs with stochastic block models~\cite{HOLLAND1983109}. The real-world Twitter datasets we use in our experiments are summarized in Table~\ref{tab:dualstats}.  We   experiment with Twitter multilayer
networks \cite{sotiropoulos2019twittermancer} crawled  from Twitter traffic generated during the month of February 2018 by
Greek-speaking users using a publicly available crawler {\sc twAwler} \cite{pratikakis2018twawler}, see also~\cite{sotiropoulos2019twittermancer}.  Specifically, we use four layers
each corresponding to  the type of interaction  {\it reply}, {\it quote}, {\it retweet}, and {\it follow}. For each pair of graphs we
test, we report the number of common nodes (i.e., Twitter accounts) that appear in both graphs, and the number  of edges in each graph.

We also use brain network datasets\cite{lanciano2020explainable} preprocessed from the public dataset released by the Autism Brain Imagine Data Exchange project. Experiments are done on 101  brain networks of children patients, 52 Typically Developed (TD) and 49 suffering from Autism Spectrum Disorder (ASD). Each network is undirected and unweighted with 116 nodes, summarizing patient's brain activity.

\spara{Machine specs.} The experiments were performed on a single machine, with Intel i7-10850H CPU @ 2.70GHz and 32GB of main memory.

\spara{Implementation and competitors.} Our code is available at \url{https://github.com/tsourakakis-lab/dense-kedge-connected}. We use the code of
Akiba et al. \cite{Akiba+13} to decompose our graph into $k$-edge connected components. Although we introduce Problem~\ref{prob:maxmin} for the first time, two competitors, i.e., BFF\footnote{We use the code from \url{https://github.com/ksemer/BestFriendsForever-BFF-}} and \kcco\footnote{The original code was not provided to us by the authors, so we provide our own implementation of \kcco in Python3.}, are considered in this section.

\begin{table}
	\caption{Statistics of dual graphs from Twitter multilayer networks and Enron (mail, cc) networks.}
	\label{tab:dualstats}
	\centering
	\begin{tabular}{c|c|c|c}
% 		\toprule
        \hline
		\# Dual graphs $(G, H)$ & \# common nodes  & \# edges in $G$ &  \# edges  in $H$ \\
		\hline
% 		\midrule
		(Reply, Quote) & 0.15M & 0.46M  & 0.48M \\
		(Reply, Retweet)  & 0.23M & 0.61M  & 2.14M \\
		(Retweet, Follow) & 0.32M & 2.39M  & 3.49M \\
% 		\hline
% 		\midrule
% 		(Mail, Cc)  & 20K & 177.7K  & 72.7K \\
		\hline
% 		\bottomrule
	\end{tabular}
\end{table}

\subsection{Ranging connectivity requirement}
\label{subsec:rangekconn}

An important aspect of our proposed framework is the fact that we can range the connectivity requirement in contrast to other methods that use different formulations or heuristics to mine dense subgraphs or communities from multilayer networks. We  illustrate the power of our framework  by comparing to the elegant Best-Friends-Forever (BFF) formulations  \cite{semertzidis2019finding}, as well as the \kcco model\cite{Cui2018}.  We generate two random graphs $G$ and $H$ on the same node set according to the stochastic block model.  Both graphs contain five  blocks $B_1, \ldots ,B_5$, where each block has 50 nodes. The internal edge density of each block $B_i$ in graph $G$, i.e., the probability any two nodes within $B_i$ are connected, is $0.1\cdot i$, $i=1,\ldots,5$; edges across blocks are generated with low probability $2\times 10^{-4}$ in order to ensure that the graph is connected, but not well-connected. The edge probability of block $B_i$ in graph $H$ is $0.1+0.1\cdot(5-i)$,  $i=1,\ldots,5$; edges across blocks are generated with probability $0.1$. Note the densities of blocks in $G$ are increasing from $B_1$ to $B_5$, while they are decreasing in $H$.

\begin{figure}[h]
    \begin{tabular}{cc}
		\includegraphics[width=0.48\textwidth]{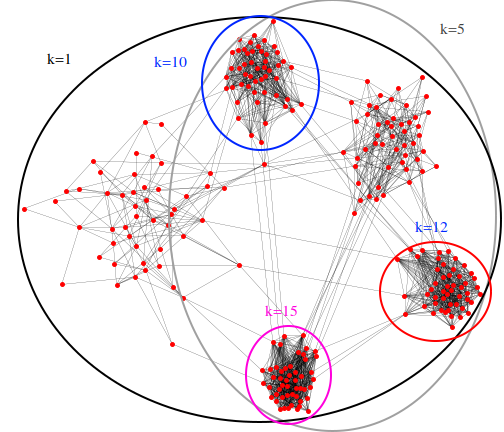} &    \includegraphics[width=0.48\textwidth]{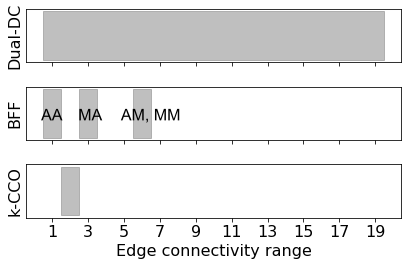} \\
		(a) &
		(b)    \\
	\end{tabular}
	\caption{\label{fig:synblock_krange} (a) Blocks found by our method for  different connectivity requirement $k$ values, visualized on graph $G$. (b) Subgraph connectivity on $G$ ranged by all methods.}
\end{figure}

\begin{figure}[h]
	\begin{tabular}{cccc}
		\includegraphics[width=0.23\textwidth]{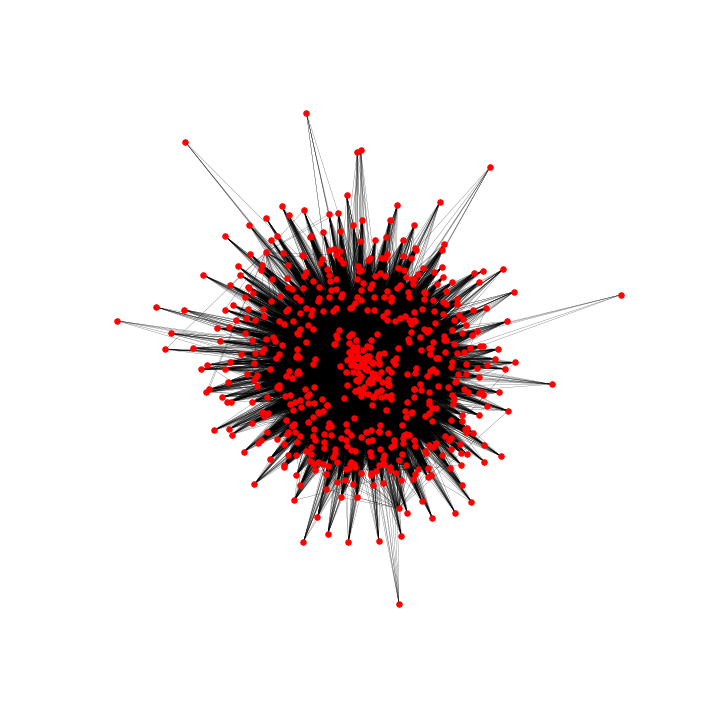} &    \includegraphics[width=0.23\textwidth]{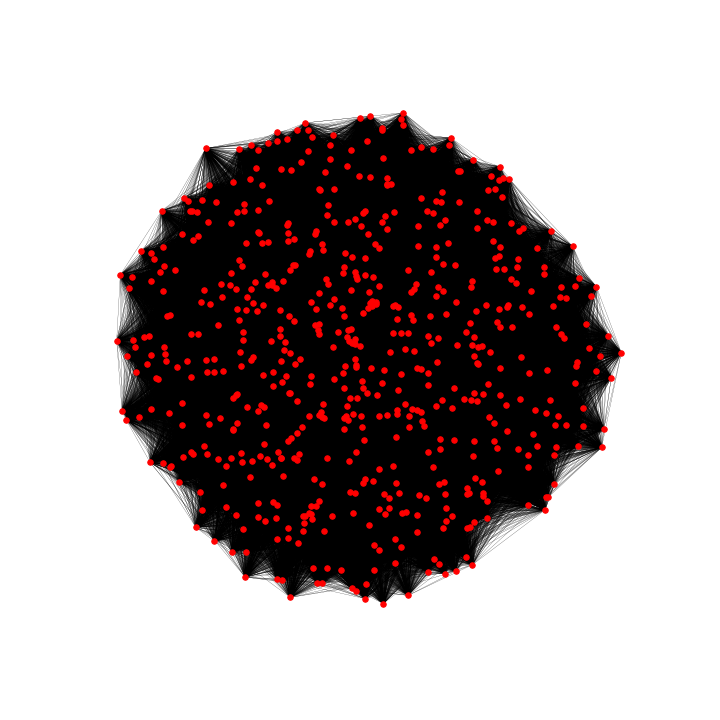} & \includegraphics[width=0.23\textwidth]{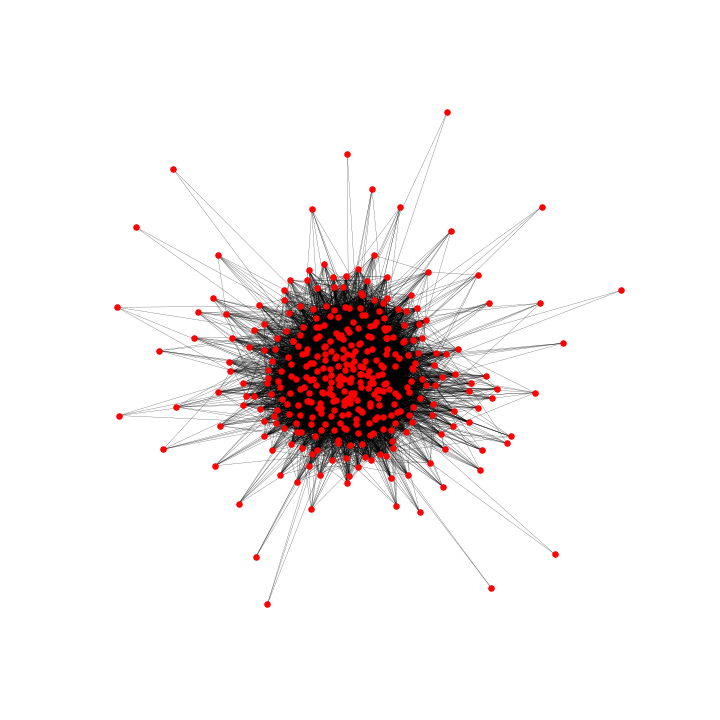} &    \includegraphics[width=0.23\textwidth]{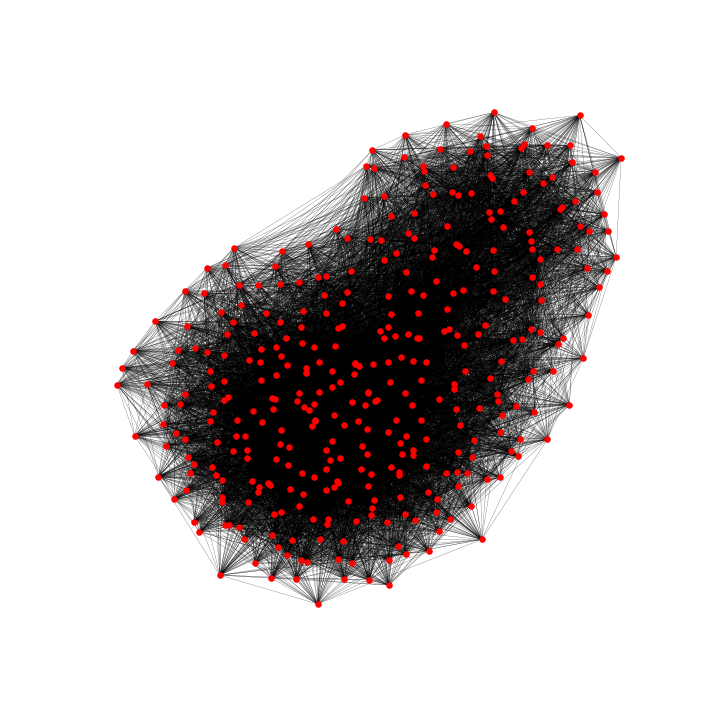}\\
		(a) &
		(b)   &
		(c) &
		(d)   \\
	\end{tabular}
	\caption{\label{fig:twitter} Visualization of our algorithm's output on the Twitter graphs, with connectivity requirement $k=2$. (a), (b) Follow$[S^*]$, Retweet$[S^*]$, (c),(d)  Quote$[S^*]$, Reply$[S^*]$.}
	\vspace{2mm}
\end{figure}

The BFF algorithm for the  AA formulation (see Section~\ref{sec:related} and \cite{semertzidis2019finding}) returns the whole graph, while the other three formulations (AM, MM, and MA)  return the subgraph induced by blocks $B_3\cup B_4\cup B_5$. The \kcco algorithm always returns the whole graph given different $k$ core value constraint on $H$. However, our method can mine the interesting connectivity structure for different $k$ values on $G$, and return the cluster has the highest core value on $H$. Figure~\ref{fig:synblock_krange}(a) visualizes the different blocks on $G$  obtained for different $k$ values, and Figure~\ref{fig:synblock_krange}(b) shows our method has more comprehensive $k$-edge connectivity control when comparing with benchmarks.

\begin{table}[hbt!]
\centering
	\caption{Twitter dual graph results. Statistics are calculated on subgraphs of $H$. }
	\label{tab:twitterresult}
	\scalebox{0.93}{
	\begin{tabular}{c|c|ccccccc}
	\hline
% 		\toprule
		Graph pair & $k$ & \makecell{\# of\\ nodes} & min deg & max deg & avg deg & diameter & \makecell{\# of\\ triangles} & \makecell{avg shortest\\ path} \\
		\hline
% 		\midrule
		(Reply, Quote) & 2 & 369 & 84 & 286 & 133.2 & 2 & 452\,293 & 1.64 \\
		(Reply, Quote) & 4 & 385 & 81 & 288 & 132.2 & 2 & 462\,063 & 1.66 \\
		(Reply, Quote) & 6 & 306 & 79 & 238 & 121.7 & 2 & 345\,408 & 1.6 \\
		(Reply, Quote) & 8 & 358 & 76 & 265 & 124.8 & 2 & 396\,485 & 1.65 \\
		(Reply, Quote) & 15 & 334 & 67 & 243 & 113.1 & 2 & 313\,266 & 1.66 \\
		(Reply, Quote) & 30 & 309 & 47 & 223 & 90.4 & 2 & 181\,080 & 1.71 \\
		(Reply, Quote) & 40 & 270 & 36 & 195 & 78.4 & 3 & 125\,975 & 1.71 \\
		\hline
		(Quote, Reply) & 2 & 368 & 51 & 222 & 81.8 & 3 & 148\,437 & 1.78 \\
		(Quote, Reply) & 4 & 419 & 50 & 242 & 82.9 & 3 & 167\,276 & 1.8 \\
		(Quote, Reply) & 6 & 334 & 50 & 203 & 78.8 & 3 & 130\,708 & 1.76 \\
		(Quote, Reply) & 8 & 465 & 49 & 257 & 83.9 & 3 & 185\,026 & 1.83 \\
		(Quote, Reply) & 15 & 300 & 47 & 189 & 73.3 & 3 & 105\,487 & 1.76 \\
		(Quote, Reply) & 30 & 294 & 41 & 181 & 67.6 & 3 & 89\,138 & 1.77 \\
		(Quote, Reply) & 40 & 276 & 36 & 172 & 62.1 & 3 & 71\,144 & 1.78 \\
		\hline
		(Reply, Retweet) & 2 & 470 & 275 & 468 & 362.9 & 2 & 8.4M & 1.23 \\
		(Reply, Retweet) & 4 & 543 & 266 & 537 & 381.5 & 2 & 10.2M & 1.29 \\
		(Reply, Retweet) & 6 & 501 & 256 & 494 & 360.8 & 2 & 8.6M & 1.28 \\
		(Reply, Retweet) & 8 & 496 & 247 & 489 & 352.7 & 2 & 8M & 1.29 \\
		(Reply, Retweet) & 15 & 432 & 224 & 426 & 316.6 & 2 & 5.7M & 1.26 \\
		(Reply, Retweet) & 30 & 277 & 152 & 274 & 213.7 & 2 & 1.7M & 1.23 \\
		(Reply, Retweet) & 40 & 663 & 99 & 465 & 197.8 & 3 & 2.8M & 1.71 \\
		\hline
		(Retweet, Reply) & 2 & 359 & 52 & 217 & 82.5 & 3 & 149\,782 & 1.77 \\
		(Retweet, Reply) & 4 & 359 & 52 & 217 & 82.5 & 3 & 149\,782 & 1.77 \\
		(Retweet, Reply) & 6 & 515 & 51 & 276 & 88 & 3 & 214\,812 & 1.84 \\
		(Retweet, Reply) & 8 & 462 & 51 & 261 & 86.1 & 3 & 192\,543 & 1.82 \\
		(Retweet, Reply) & 15 & 350 & 51 & 214 & 81.3 & 3 & 143\,712 & 1.77 \\
		(Retweet, Reply) & 30 & 771 & 49 & 321 & 88 & 3 & 297\,365 & 1.94 \\
		(Retweet, Reply) & 40 & 724 & 49 & 302 & 85.8 & 3 & 272\,389 & 1.94 \\
		\hline
		(Retweet, Follow) & 2 & 1030 & 219 & 966 & 356.6 & 2 & 9.1M & 1.65 \\
		(Retweet, Follow) & 6 & 931 & 198 & 874 & 324.9 & 2 & 6.9M & 1.65 \\
		(Retweet, Follow) & 15 & 791 & 180 & 749 & 293.4 & 2 & 5.0M & 1.63 \\
		(Retweet, Follow) & 30 & 638 & 154 & 603 & 251.6 & 2 & 3.2M & 1.60 \\
		(Retweet, Follow) & 50 & 451 & 134 & 435 & 209.0 & 2 & 1.7M & 1.54 \\
		\hline
		(Follow, Retweet) & 2 & 620 & 285 & 613 & 415.1 & 2 & 13.5M & 1.33 \\
		(Follow, Retweet) & 6 & 612 & 285 & 606 & 413 & 2 & 13.2M & 1.32 \\
		(Follow, Retweet) & 15 & 488 & 282 & 487 & 372.1 & 2 & 9.1M & 1.24 \\
		(Follow, Retweet) & 30 & 612 & 265 & 606 & 400.6 & 2 & 12.3M & 1.34 \\
		(Follow, Retweet) & 50 & 484 & 228 & 480 & 342.7 & 2 & 7.5M & 1.29 \\

		\hline
	\end{tabular}
	}
\end{table}

\subsection{Mining Twitter}
\label{subsec:twitter}

Table~\ref{tab:twitterresult} shows our results for some pairs of Twitter graphs, and for various values of the connectivity requirement $k$.  We show the output difference by using the pairs (Reply, Quote),  (Reply, Retweet), and (Retweet, Follow) and their reverse ordering. Recall that  for a given  pair $(G,H)$ we impose the connectivity requirement on $G$. Table~\ref{tab:twitterresult} shows the number of nodes in the optimal solution $S^*$, as well as some basic graph statistics of the subgraph $H[S^*]$. We observe that $|S^*|$  is largest for the {\em follow} and {\em retweet} pairs of interactions. If this were not the case, this would have been surprising since the  corresponding 2-layered graph for  {\em follow} and {\em retweet}  shares more nodes (0.32M) than the other two pairs of interactions.    The average shortest path among all induced subgraphs on $H$ is always less than 2, and as can be seen by comparing $|S^*|$ and the maximum degree, there is typically a hub node connecting almost all pairs via its ego-network. Furthermore, as we can see by the average degree in $H[S^*]$, those induced subgraphs are quasi-clique-like \cite{Tsourakakis+13}.  Figure~\ref{fig:twitter} visualizes the output for our algorithm for the pairs ({\em follow, retweet}), ({\em quote,reply}) when $k=2$.

The problem of finding large-near cliques in a single graph is NP-hard, but in recent years  tools that work efficiently on large-scale
networks have been proposed (see \cite{konar2020mining,Tsourakakis+13,Tsourakakis_15,Mitzenmacher+15}).  It is worth emphasizing an
interesting side-effect of our algorithm that appears to hold on real-world dual graphs. Once you are able to find a well-connected
subgraph across two networks, it appears in practice that it is a  large near-clique. Our findings on the large near-cliques we find on
$H$, agree with the theorems of Konar and Sidiropoulos~\cite{konar2020mining} concerning the existence of large-near cliques in the ego-networks of certain nodes.

% In Figure~\ref{fig:gcn}, each blue point represents a dataset, the x-axis describes how big algorithm's finding subgraph $S^*$ is compare to the whole graph, and y-axis tells the accuracy improvement of GCN. We do observe a roughly negative relation that support our hypothesis, as fitting these data points with a linear regression model returns 0.77 R2 score. Variance is expected, as the power of GCN is not only from richer information, and its performance can be affected by different datasets and different settings. Also, our algorithm measures the similarity between subgraph density of dual graphs, not the degree of homophily like others, e.g. \cite{Pei2020Geom-GCN}. Therefore we do think our approach can be used as a simple and novel test to learn some characteristics of a new dataset, especially before applying GCN related training.

\subsection{Mining brain networks}
\label{subsec:fmri}

Finally we apply our algorithm to all possible  dual graphs defined by typically developed (TD) children and children suffering from Autism Spectrum Disorder (ASD). We report our findings for the $52 \times 51$ possible (i.e., ordered) pairs of brain graphs of TD children, and $49 \times 52$ possible pairs of children suffering from   ASD and TD children. 

Figure~\ref{fig:brain_runtime_compare}(a) shows the box plot  of the output sizes 
for $k=14$. Our results are stable over the choice of $k$ in the range we tried (i.e., $k$ values between 10 and 20).   Despite the
existence of several outlier pairs of (TD,TD) graphs with respect to their output size (e.g., plenty of  (TD,TD)  pairs share  about 40
        nodes as the joint optimal solution), there  is still a separation of the averages; for $k=14$ the respective average value of
$|S^*|$ is 95 and 90 for  (TD,TD), (ASD,TD) dual graphs respectively. We observe that even if the range of values is similar for the
two types of dual graphs, the medians are also separated as shown by the box plot. We conclude that  there exists  a weak but
measurable signal  that indicates that  healthy individual brains are at least on average well connected across a larger subset of nodes, a finding that agrees with \cite{lanciano2020explainable} in spirit, and is consistent with other studies in the context of other diseases that argue that  ``better connected brains, healthier brains''; see, e.g., \cite{supekar2008network,navlakha2015decreasing} and references therein. We highlight the importance of ranging connectivity to observe such phenomenon, as it is invisible from the results of \kcco that always returns the whole graph.

\begin{figure}[h]
    \begin{tabular}{cc}
		\includegraphics[width=0.48\textwidth]{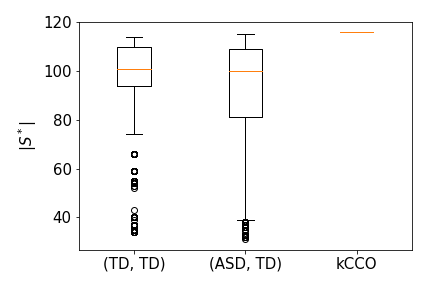} &    \includegraphics[width=0.48\textwidth]{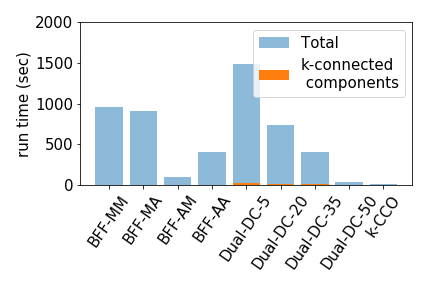} \\
		(a) &
		(b)  \\
	\end{tabular}
	\caption{\label{fig:brain_runtime_compare} (a) Box plots for the size $|S^*|$, over all possible (TD, TD) and (TD, ASD) dual graphs. The average size of $|S^*|$ is 95 and 90 respectively. The competitor \kcco always returns the whole graph. (b) Running time(sec) of Our algorithm with ranged $k$ value, together with four variants of BFF and \kcco.}
\end{figure}

\hide{

	\begin{figure}
		\begin{tabular}{cc}
			\includegraphics[width=0.43\columnwidth]{figs/brain/brain_k=14.png} & 	\includegraphics[width=0.43\columnwidth]{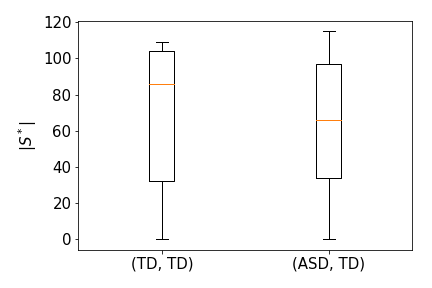}\\
			(a) & (b) \\
		\end{tabular}
		\caption{\label{fig:brain} Box plots}
	\end{figure}

	\begin{figure}[t]
		\centering
		\begin{subfigure}[]{\columnwidth}
			\centering
			\includegraphics[width=\columnwidth]{figs/brain/brain_k=14.png}
			\caption{$k=14$, average sizes are 95 and 90.}
			\label{fig:braink=14}
		\end{subfigure}
		\hfill
		\begin{subfigure}[]{\columnwidth}
			\centering
			\includegraphics[width=\columnwidth]{figs/brain/brain_k=15.png}
			\caption{$k=15$, average sizes are 68 and 65.}
			\label{fig:braink=15}
		\end{subfigure}
		\caption{Box plots of $|S^*|$ of dual graphs (TD, TD) and (ASD, TD).}
		\label{fig:brain_box}
	\end{figure}
}

\hide{ 
\begin{figure}[h]
	\centering
	\includegraphics[width=\columnwidth]{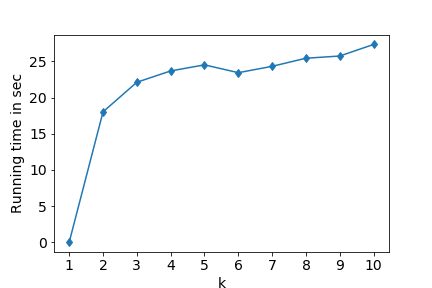}
		\captionof{figure}{Running time of Akiba et al. \cite{Akiba+13} algorithm versus the connectivity requirement $k$  on an Erd{\H{o}}s-R{\'e}nyi graph with $10\,000$ nodes and edge probability $0.005$.}
		\label{fig:synruntime_k}
\end{figure}
}

\subsection{Scalability analysis}
\label{subsec:synth}

Figure~\ref{fig:brain_runtime_compare}(b) shows the running time of the four variants of BFF, our method for $k=5,\, 20,\, 35,\, 50$,  and \kcco algorithms on the Twitter (reply, quote) dual graph respectively. We  observe that controlling the connectivity requirement comes at the cost of the run time, compared to the competitor methods.   \dc runs in at most 25 minutes for $k=5$. As $k$ grows, the depth of the recursion decreases,  and thus we observe that the total run time decreases and becomes comparable to the competitor methods. For instance, for $k=5,\, 50$ the depth of the recursion (i.e., the number of calls to \dc) is equal to 294 and 20 respectively. The bar plot illustrates the run time required for the $k$-connected components computation for this specific dataset.   In general, our method handles the graphs from Table~\ref{tab:dualstats}   in at most thirty minutes on a single machine for any $k$ connectivity value, with the single exception of the largest pair (Retweet, Follow), for which our code requires 5 hours to execute.
  
    \section{Conclusions}
\label{sec:conclusion}

In this work we introduce a new problem on a dual graph  $(G,H)$, that aims to find  a set of nodes that induces a well-connected subgraph on $G$ and a dense subgraph on $H$. We prove that our formulation admits an exact solution, and we propose an algorithm that runs in polynomial time.   In practice, our algorithm scales on graphs with several millions of edges on a single machine, and runs reasonably fast. Compared to competitor methods, our \dc method enables to control the connectivity constraint on $H$. Designing a faster algorithm is an interesting  open question.  We show that our method can be used in practice to mine layers of Twitter and human brain networks.  

Our work opens various interesting directions. A broad direction is the design of exact, and approximation algorithms for mining dual graphs. In principle, optimizing objectives over dual graphs is harder than over a single graph, not only from a formal complexity point of view (e.g., \cite{bhangale2020simultaneous}), but even for designing well-performing heuristics.  A natural open question is whether we can design an efficient approximation algorithm  when we choose the average degree instead of the minimum degree as density measure.

    \bibliographystyle{alpha}
    \bibliography{ref}
\end{document}